\newtheorem{conj}{Conjecture}
\newtheorem{thm}{Theorem}
\newtheorem{defn}{Definition}
\begin{document}

\thispagestyle{empty}

\markboth{Danny Rorabaugh}
	{A bound on a convexity measure for point sets}

\title{A bound on a convexity measure for point sets}

\author{Danny Rorabaugh}
\let\thefootnote\relax\footnote{University of South Carolina; rorabaug@email.sc.edu}

\begin{abstract}
A planar point set is in convex position precisely when it has a convex polygonization, that is, a polygonization with maximum interior angle measure at most $\pi$. We can thus talk about the convexity of a set of points in terms of the minimum, taken over all polygonizations, of the maximum interior angle. The main result presented here is a nontrivial combinatorial upper bound of this min-max value in terms of the number of points in the set. Motivated by a particular construction, we also pose a natural conjecture for the best upper bound.\\

\noindent {\sc Keywords.}  Planar Geometry; Polygonization; Convexity; Angle Measure.
\end{abstract}

\maketitle

\section{Introduction}

A simple planar polygon is {\it convex} provided the segment connecting any two points in the polygon is also contained in the polygon. Henceforth we assume that everything is in $\mathbb{R}^2$. Given a point set, a {\it polygonization} of the set is a simple polygon whose vertices are the points of the set. We say that a point set in general position -- that is, with no three points collinear -- is {\it in convex position} provided it has a convex polygonization. For a point set that is not in convex position, can we quantifiably describe the `convexity' of the point set?

In this paper we consider a new way to measure point set convexity, prove a bound on that measure in terms of the number of points, and present a natural conjecture for the best such bound. Clearly a point set in general position is convex if and only if there exists a polygonization with maximum interior angle-measure less than $\pi$. Suppose the maximum interior angle of a polygonization of a fixed non-convex set, minimized over all polygonizations, has measure $\pi + \varepsilon$. Then a smaller $\varepsilon$ value would represent being `closer' to convex position. The conjecture at the end of the paper implies that $\varepsilon$ is at most $2\pi - 2\pi/(n-1)$; we prove a weaker bound in Theorem \ref{Main}.

\subsection{Similar Work}

There exists research on several other perspectives of the convexity of a point set. Arkin, et al.\cite{AF02} introduce the concept of reflexivity, which counts the smallest number of reflex angles (i.e., interior angles with measure greater than $\pi$) in a polygonization of a point set. Despite the similar flavor to the min-max measure presented here, there appears to be no immediate relation between the two. For example, it is easy to construct a series of polygons with an increasing proportion of reflex angles, yet maximum interior angle measure approaching $\pi$. Likewise, polygons with one reflex angle can have the measure of that angle arbitrarily close to $2\pi$.

Alternatively the maximum area of a polygonization, in relation to the area of the convex hull, could be used to describe how convex a point set is. Zunich and Rosin\cite{ZR04} discuss various `area-based' measures of convexity; Fekete\cite{Fe00} analyzes the complexity of finding the polygonalization of maximum area.

The number of triangulations of a point set could be considered a measure of convexity, as seen in Welzl's\cite{We07} discussion of results for some particular non-convex arrangements of points. A Delaunay triangulation of a point set is known to maxiize the minimum angle of all angles in the triangulation\cite{Mu97}. If a connection exists between these triangulations and the exterior angles of polygonizations, then the known algorithms and results for Delauny triangulations could very well prove relevant to the present work.

We might define convexity of a point set by how the set decomposes into sets in convex position. Urabe and Hosono\cite{Ur96,HU01} studied the fewest number of parts in a partition of a point set into vertex sets of disjoint convex polygons; this was previously explored by Chazelle and Dobkin\cite{CD85} for fixed polygons rather than point sets. In a similar vein, Wu and Ding\cite{WD08} studied for point sets the maximum number of subsets in convex position that, though not necessarily disjoint, have no other points in their interior.

One quality of a convex polygon is the full visibility of the entire polygon from any interior point. Ever since the `art gallery problem' was posed by Klee in 1973, as related by O'Rourke\cite{OR87}, there has been extensive research into various forms of polygon visibility: by Rote\cite{Ro13} and Stern\cite{St89}, for example. A visibility measure or any other polygonal measure that is maximized or minimized by convex polygons could feasibly be used for point sets by optimizing over all possible polygonizations.

\section{Main Result}

\begin{defn}
Let $S$ be a set of $n > 3$ points in general position.
\begin{itemize}
\item An {\it extremal point} of $S$ is a point on the convex hull of $S$ and an {\it internal point} of $S$ is a non-extremal point.
\item The circle $C(S)$ that {\it circumscribes} S is the smallest circle containing $S$.
\end{itemize}
\end{defn}

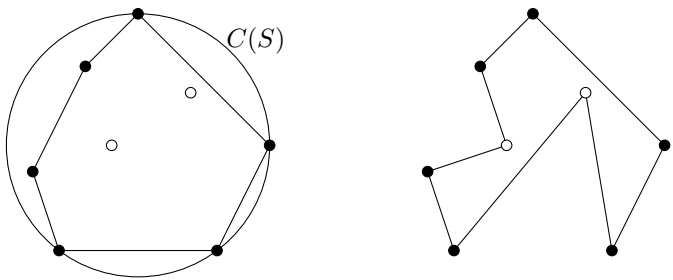
\begin{figure}[h] 
\center
\begin{tikzpicture}[scale=.35]
	\draw (0,5) -- (5,0) -- (3,-4) -- (-3,-4) -- (-4,-1) -- (-2,3) -- (0,5);
	\filldraw (5,0) circle (.2cm);
	\filldraw (3,-4) circle (.2cm);
	\draw[fill=white] (2,2) circle (.2cm);
	\filldraw (-2,3) circle (.2cm);
	\filldraw (-3,-4) circle (.2cm);
	\filldraw (0,5) circle (.2cm);
	\draw[fill=white] (-1,0) circle (.2cm);
	\filldraw (-4,-1) circle (.2cm);
	\draw (0,0) circle (5cm);
	\draw (3,4) node[right]{$C(S)$};

\begin{scope}[shift={(15,0)}]
	\draw (0,5) -- (5,0) -- (3,-4) -- (2,2) -- (-3,-4) -- (-4,-1) -- (-1,0) -- (-2,3) -- (0,5);
	\filldraw (5,0) circle (.2cm);
	\filldraw (3,-4) circle (.2cm);
	\draw[fill=white] (2,2) circle (.2cm);
	\filldraw (-2,3) circle (.2cm);
	\filldraw (-3,-4) circle (.2cm);
	\filldraw (0,5) circle (.2cm);
	\draw[fill=white] (-1,0) circle (.2cm);
	\filldraw (-4,-1) circle (.2cm);
\end{scope}
\end{tikzpicture}

\caption{A set $S$ of points with closed circles for extremal points and open circles for internal points. The left image shows the convex hull and $C(S)$; the right shows a polygonization of $S$.}
\label{polyg}
\end{figure}

\begin{thm} \label{Main}
Let $S$ be a set of $n > 3$ points in general position with $x < n$ extremal points. There exists a polygonization of $S$ such that every interior angle has measure at most $2\pi - \displaystyle \frac{\pi}{(n-1)(n-x)}$.
\end{thm}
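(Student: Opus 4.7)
The plan is induction on the number $m = n-x$ of internal points. For the base case $m = 1$, the convex hull of $S$ is a convex $(n-1)$-gon and there is a single internal point $p$ in its interior. The $n-1$ hull edges subtend angles at $p$ summing to $2\pi$, so by pigeonhole some hull edge $AB$ has $\angle ApB \ge 2\pi/(n-1)$. Replacing $AB$ by the two-edge path $A,p,B$ gives a polygonization; the angles at $A$ and $B$ only decrease from their original convex values, and the new interior angle at $p$ equals $2\pi - \angle ApB \le 2\pi - 2\pi/(n-1) < 2\pi - \pi/(n-1)$, which matches the bound $2\pi - \pi/((n-1)\cdot 1)$.

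For the inductive step ($m \ge 2$), let $p$ be any internal point and $S' = S \setminus \{p\}$; then $S'$ has $n-1$ points with $x$ extremal and $m-1$ internal, so by the inductive hypothesis it has a polygonization $P'$ with maximum angle at most $2\pi - \pi/((n-2)(m-1))$. Since $(n-2)(m-1) < (n-1)m$, this is strictly stronger than the target bound, so the inherited angles of $P'$ are already within it. I then insert $p$ by replacing some edge $AB$ of $P'$ with the path $A,p,B$: only three angles change, and provided the triangle $pAB$ lies in $P'$, the angles at $A$ and $B$ strictly shrink (each new ray lies strictly inside the old interior wedge at $A$, respectively $B$). The new angle at $p$ is $2\pi - \angle ApB$, so it suffices to find an insertion edge $AB$ with $\angle ApB \ge \pi/((n-1)m)$.

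The heart of the argument is this existence step. The natural tool is the visibility polygon of $p$ in $P'$: its boundary wraps angular measure $2\pi$ around $p$ and consists of arcs along edges of $P'$ interspersed with shadow chords emanating from reflex vertices of $P'$. A pigeonhole over this boundary would produce a fully visible edge of $P'$ subtending a non-negligible angle at $p$. Two things need care: first, the case $p \notin P'$, where the insertion instead adds a triangle to the interior (making $\angle ApB < \pi$ automatic but enlarging the angles at $A$ and $B$, which must be bounded separately); and second, refining the pigeonhole so that the denominator reaches $(n-1)(n-x)$ rather than the cruder $n-1$ or $n+m$ one would get from a raw edge count. Extracting the extra factor of $n-x$ is the main obstacle and seems to require charging the blocked angular measure to the at most $m-1$ reflex vertices of $P'$, which in the naturally constructed $P'$ are exactly the previously inserted internal points.
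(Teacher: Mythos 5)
Your base case is fine, but the inductive step has a genuine gap at exactly the point you flag yourself: the existence of an insertion edge $AB$ of $P'$ with $\angle ApB \ge \pi/((n-1)(n-x))$ such that $pA$ and $pB$ cross no edge of the new polygon. The visibility-polygon pigeonhole does not deliver this. Partitioning the directions around $p$ into maximal intervals in which a single edge of $P'$ is first hit gives at most $n-1$ intervals, so some edge owns angular measure at least $2\pi/(n-1)$; but that edge need not be \emph{fully} visible from $p$, and in particular its endpoints need not be visible, in which case $pA$ or $pB$ leaves $P'$ and the resulting polygon is not simple. Indeed, it is a standard fact from art-gallery theory that a simple polygon can have interior points from which no edge is completely visible, so ``a fully visible edge subtending a non-negligible angle'' is not available for free; you would have to establish it for the specific polygonizations your induction produces, and your proposed charging of blocked angular measure to the reflex vertices is stated as a plan, not carried out. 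The branch $p \notin \mathrm{int}(P')$ is a second unresolved piece: there the angles at $A$ and $B$ grow, and nothing in your inductive hypothesis reserves room for that growth. As written, the proposal is an outline whose central lemma is missing.

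For contrast, the paper avoids induction entirely. It constructs one polygonization $P(d)$ for each viewpoint $d$ on the circle circumscribing $S$ (keep the hull edges facing away from $d$, then close up by sorting the remaining points angularly around $d$), observes that $d$ lies inside every reflex-complementary exterior angle of $P(d)$ --- in particular inside a ``potential exterior angle'' at the vertex of the smallest exterior angle --- and that the whole arc of viewpoints producing the same polygon lies in that angle, so by the inscribed-angle theorem the arc measure is at most twice the angle measure. Averaging the total arc measure $2\pi$ over the at most $(n-1)(n-x)$ potential exterior angles at internal points yields one polygonization whose smallest exterior angle is at least $\pi/((n-1)(n-x))$. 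If you want to rescue your inductive scheme, the missing lemma is essentially a single-point version of this averaging argument, and proving it looks no easier than the global version.
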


\begin{proof}
The proof of the main theorem has two stages. First we will construct a set $\mathcal{P}(S)$ of polygonizations of $S$. Second we will show that one of the polygons in $\mathcal{P}(S)$ satisfies the desired property.

In the first stage, we will use each point $c$ on $C(S)\setminus S$ such that $S\cup\{c\}$ is still in general position. Since the line through any two points of $S$ intersects $C(S)$ twice, there are at most $2\binom{|S|}{2}$ points of $C(S)$ that are not as described. Removing these finitely many troublesome points, we label the remaining set $D(S)$. Observe for future use that $D(S)$ is the disjoint union of arcs of $C(S)$, the sum of whose (interior) angle measures is still $2\pi$.

Now given a point $d \in D(S)$, we will define an associated polygonization $P(d)$ on $S$ (illustrated in Fig. \ref{Pd}). Note that for every segment of the convex hull of $S$, all the interior points of the convex hull lie on the same side of the segment - or rather, the line containing the segment. To begin, we include in $P(d)$ every segment of the convex hull such that $d$ is on the same side as the interior points of the convex hull. By convexity of the convex hull, the segments included in $P(d)$ will be connected, forming a polygonal arc whose endpoints we label $x$ and $y$. Note that the angle $\angle xdy$ has in its interior all points of $S$ that are not on the polygonal arc. Next we assign an angular order to these points with respect to rotation around $d$ away from the ray $\overrightarrow{dx}$, and respectively label them $x_1, x_2, \ldots, x_k$. Setting $x=x_0$ and $y = x_{k+1}$ we include the line segment $\overline{x_ix_{i+1}}$ in $P(d)$ for each $i \in \{0,1, \cdots,k\}$, and so complete our polygonization $P(d)$.

\begin{figure}
\center
\begin{tikzpicture}[scale=.4]
	\draw[very thick] (0,5) -- (5,0) -- (3,-4) -- (-3,-4) -- (-4,-1);
	\draw[dashed] (-4,-1) -- (-2,3) -- (0,5);
	\filldraw (5,0) circle (.15cm);
	\filldraw (3,-4) circle (.15cm);
	\filldraw (2,2) circle (.15cm);
	\filldraw (-2,3) circle (.15cm);
	\filldraw (-3,-4) circle (.15cm);
	\filldraw (0,5) circle (.15cm) node[above]{$y$};
	\filldraw (-1,0) circle (.15cm);
	\filldraw (-4,-1) circle (.15cm) node[left]{$x$};
	\draw (0,0) circle (5cm);
	\filldraw[] (-4,3) circle (.2cm) node[left] {$d$};

\begin{scope}[shift={(15,0)}]
	\draw (0,5) -- (5,0) -- (3,-4) -- (-3,-4) -- (-4,-1);
	\draw[very thick] (-4,-1) -- (-1,0) -- (2,2) -- (-2,3) -- (0,5);
	\draw[->,dashed](-4,3) -- (-4,-5);
	\draw[->,dashed](-4,3) -- (2,6);
	\filldraw (5,0) circle (.15cm);
	\filldraw (3,-4) circle (.15cm);
	\filldraw (2,2) circle (.15cm);
	\filldraw (2,2) circle (.15cm) node[below]{$x_2$};
	\filldraw (-2,3) circle (.15cm) node[below]{$x_3$};
	\filldraw (-3,-4) circle (.15cm);
	\filldraw (0,5) circle (.15cm) node[above]{$x_4$};
	\filldraw (-1,0) circle (.15cm) node[above]{$x_1$};
	\filldraw (-4,-1) circle (.15cm) node[left]{$x_0$};
	\filldraw (-4,3) circle (.2cm) node[left] {$d$};
	\draw (0,-2) node {$P(d)$};
\end{scope}
\end{tikzpicture}

\caption{The construction of polygonization $P(d)$ of $S$. On the left is the polygonal arc associated with $d$; the right shows the ordering on the remaining vertices of $S$ and the completed polygon.}
\label{Pd}
\end{figure}
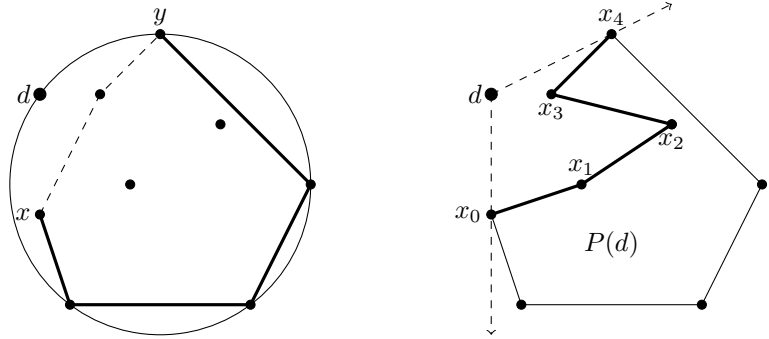

Our set of polygonizations of $S$ is $\mathcal{P}(S) = \{P(d) \mid d \in D(S)\}$. Even though $D(S)$ is uncountable, we can show that $\mathcal{P}(S)$ is finite. Consider distinct points $c,d \in D(S)$ such that $c$ and $d$ are on the same side of each of the $\binom{|S|}{2}$ segments connecting two points of $S$. By the construction above, $P(c) = P(d)$. We previously noted that $D(S)$ is the disjoint union of arcs of $C(S)$. No arc in $D(S)$ intersects with any line containing one of the $\binom{|S|}{2}$ aforementioned segments. Thus we have exactly one polygonization associated with each maximal arc of $D(S)$. Furthermore, the maximal arcs can be classified by an assigment of $\binom{|S|}{2}$ binary values. Since some assignments may be impossible and other assigments may be redundant, we do not have equality, but rather $|\mathcal{P}(S)| \leq 2^{\binom{|S|}{2}}$. The exact value is unimportant; we only need finitude for our next objective. We also need the following terminology.

\begin{defn}\mbox{}
Given an internal point $p$ of $S$, the {\it Potential Exterior Angles} -- \textsc{pea}s -- at $p$ are the $n-1$ angles formed by the $n-1$ rays from $p$ to the other points of $S$. (See Fig. \ref{PEAs}.)
\end{defn}

Given a point $d \in D(S)$, we find that $d$ is in the interior of every exterior angle of $P(d)$ with measure less than $\pi$. No such exterior angle of $P(d)$ can involve a segment from the polygonal arc of the former stage of construction, as the arc is part of the convex hull of $S$. Take an exterior angle formed by two consecutive segments chosen in the latter stage of construction of $P(d)$: if $d$ were not in the interior, the three points defining the angle would have been ordered differently.

Futhermore, note that the vertex of each exterior angle with measure less than $\pi$ is an internal point of $S$. Label as $s$ the vertex of an exterior angle of $P(d)$ with smallest measure. This angle is a composition of consecutive \textsc{pea}s at $s$. We will throw into the proverbial `pot' the \textsc{pea} $\varphi(d)$ at $s$ which contains $d$ in its interior. (See Fig. \ref{PEAs}.) Angle $\varphi(d)$ has two important properties. Property 1 is that the entire maximal arc of $D(S)$ containing $d$ is in the interior of $\varphi(d)$. Then from circle geometry, we know that the angle measure of the arc is at most twice the angle measure of $\varphi(d)$. Property 2 is that $\varphi(d)$ has angle measure at most as large as the smallest exterior angle-measure of $P(d)$.

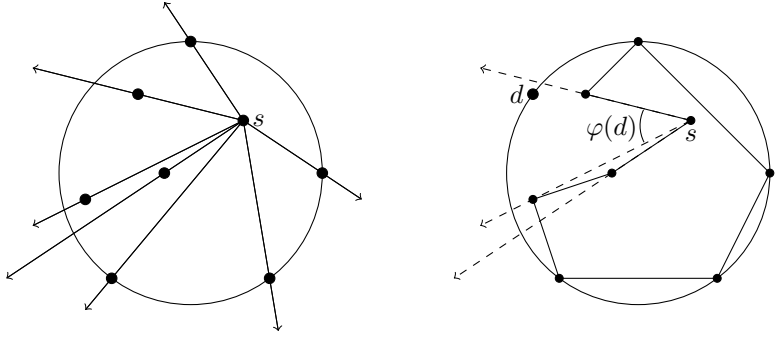
\begin{figure}
\center
\begin{tikzpicture}[scale=.35]
	\filldraw (2,2) circle (.2cm) node[right] {$s$};
	\filldraw (5,0) circle (.2cm);
	\filldraw[->] (2,2) -- (6.5,-1);
	\filldraw (3,-4) circle (.2cm);
	\filldraw[->] (2,2) -- (3.33,-6);
	\filldraw (-2,3) circle (.2cm);
	\filldraw[->] (2,2) -- (-6,4);
	\filldraw (-3,-4) circle (.2cm);
	\filldraw[->] (2,2) -- (-4,-5.2);
	\filldraw (0,5) circle (.2cm);
	\filldraw[->] (2,2) -- (-1,6.5);
	\filldraw (-1,0) circle (.2cm);
	\filldraw[->] (2,2) -- (-7,-4);
	\filldraw (-4,-1) circle (.2cm);
	\filldraw[->] (2,2) -- (-6,-2);
	\draw (0,0) circle (5cm);

\begin{scope}[shift={(17,0)}]
	\draw (0,5) -- (5,0) -- (3,-4) -- (-3,-4) -- (-4,-1) -- (-1,0) -- (2,2) -- (-2,3) -- (0,5);
	\filldraw (5,0) circle (.15cm);
	\filldraw (3,-4) circle (.15cm);
	\filldraw (2,2) circle (.15cm) node[below]{$s$};
	\filldraw (-2,3) circle (.15cm);
	\filldraw (-3,-4) circle (.15cm);
	\filldraw (0,5) circle (.15cm);
	\filldraw (-1,0) circle (.15cm);
	\filldraw (-4,-1) circle (.15cm);
	\draw (0,0) circle (5cm);
	\filldraw[] (-4,3) circle (.2cm) node[left] {$d$};
	\draw[->,dashed](2,2) -- (-6,4);
	\draw[->,dashed](2,2) -- (-6,-2);
	\draw[->,dashed](2,2) -- (-7,-4);
	\draw (.2,2.4) arc (155:207:1.4);
	\draw (-1,1.7) node {$\varphi(d)$};
\end{scope}
\end{tikzpicture}

\caption{Picking \textsc{pea}s. Left shows the rays that define the \textsc{pea}s at point $s$; right shows the two \textsc{pea}s that compose the minimum-measure exterior angle of $P(d)$, $\varphi(d)$ being the \textsc{pea} containing $d$.}

\label{PEAs}
\end{figure}

For each $d \in D(S)$, we drop the \textsc{pea} $\varphi(d)$ in the pot. Since there are $n-x$ internal points and $n-1$ \textsc{pea}s per internal point, there is a clear upper bound of $(n-1)(n-x)$ for the number of \textsc{pea}s in the pot (ignoring multiplicity). For each $\varphi$ in the pot, there might be multiple maximal arcs of $D(S)$ such that $\varphi = \varphi(d)$ for each $d$ in the arc. But the arcs are disjoint, so the sum of their angle measures is at most twice the measure of $\varphi$, by Property 1. Let $m$ be the maximum angle measure of the \textsc{pea}s in the pot. Recalling that the sum of the measures of the maximal arcs in $D(S)$ is $2\pi$, we get the  inequality: $2\pi \leq (2m)(n-1)(n-x)$. Thus gives the following upper bound on $m$:
$$m \geq \frac{\pi}{(n-1)(n-x)}.$$

Now we ladel out from the pot a \textsc{pea} $\theta$ whose angle measure is $m$, and take a polygon $P \in \mathcal{P}(S)$ whose minimum-measure angle consists of $\theta$, and possibly other \textsc{pea}s. By Property 2, the largest interior angle of $P$ is at most $2\pi - m$, and
$$2\pi - m \leq 2\pi - \frac{\pi}{(n-1)(n-x)}.$$

\end{proof}

\subsection{A Note on Complexity}
The computational complexity of obtaining a polygonization satisfying the bound of Theorem \ref{Main} can be readily extracted from the constructive first stage of the proof. There we identified a set of at most $\binom{n}{2}$ polygonizations, one of which has the desired property. The most computationally expensive step of constructing each polygonization is sorting the points not found on the initial polygonal arc, which can be done in $O(n\log n)$ time. Therefore, constructing the entire set of candidates can be done in $O(n^3 \log n)$ time. This absorbs the $O(n^3)$ time required to measuring the interior angles of all these polygonizations and identify the winner.

\section{A Natural Conjecture}

The proof of Theorem \ref{Main} appears loose in that it allows for adding all the \textsc{pea}s to the soup, which is surely overkill. It seams that the strict lower bound on $m$ should be on the order of $1/n$, as posited in the following natural conjecture. The author can, with messy details, decrease the bound on the number of \textsc{pea}s in the pot by a quantity of order $n$. However, this effort is asymptotically insignificant.

\begin{conj}
There exists a polygonization of $S$ with every interior angle of measure at most $2\pi-\frac{2\pi}{n-1}$. If true, this bound is tight, as demonstrated by the following construction: $n-1$ points equally spaced around a circle with the $n^{th}$ point at the center.
\end{conj}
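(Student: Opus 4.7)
Since the polygon lies inside the convex hull of $S$, every interior angle at an extremal vertex is at most $\pi$, which already meets the conjectured bound; the task reduces to controlling the interior angles at the $n-x$ internal points.

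The plan is to construct the polygonization incrementally. Start from the convex hull polygon $P_0$ of $S$ (with $x$ vertices) and insert the $n-x$ internal points one at a time. When inserting $v$ into an edge $uw$ of the current polygon $P_i$, replace $uw$ by the chain $u \to v \to w$. Because $v$ lies in the interior of $P_i$, the new polygon is $P_i$ with the triangle $uvw$ excised; thus $v$ becomes a reflex vertex with interior angle exactly $2\pi - \angle uvw$, while the interior angles at $u$ and $w$ only \emph{decrease}, since the rays $\overrightarrow{uv}$ and $\overrightarrow{wv}$ lie inside the former interior sectors at $u$ and $w$. Bounds achieved at earlier insertions are thus preserved. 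The pigeonhole input is that $v$ lies inside $P_i$, so the boundary of $P_i$ has winding number one around $v$ and the total subtended angle is $2\pi$; some edge $uw$ therefore satisfies $\angle uvw \geq 2\pi/n' \geq 2\pi/(n-1)$, with $n' \leq n-1$ the current polygon size, and inserting $v$ there yields an interior angle of at most $2\pi - 2\pi/(n-1)$ at $v$. On the tight example of $n-1$ equi-spaced points around a center, every edge of $P_0$ subtends exactly $2\pi/(n-1)$ at the center, and the bound is achieved with equality.

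The main obstacle is simplicity: inserting $v$ into $uw$ yields a simple polygon only when $v$ sees the entire edge $uw$ --- equivalently, no polygon vertex of $P_i$ lies in the open triangle $uvw$, and none of the as-yet-uninserted internal points does either (else that point becomes exterior to the new polygon and cannot be inserted afterward). For highly non-convex current polygons, the subcollection of edges safe to insert into can subtend much less than $2\pi$ at $v$, and the naive pigeonhole bound on $\angle uvw$ degrades. Overcoming this is the technical heart of the argument: my plan is to order the insertions carefully --- for instance, by convex layers of the remaining unprocessed internal points, processing the innermost first so the carved triangles are small, or processing the outermost first so the polygon stays as star-shaped as possible from the remaining interior points --- and to show that such an ordering always affords a safe edge $uw$ with $\angle uvw \geq 2\pi/(n-1)$. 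As a fallback, a sharper accounting in the proof of Theorem \ref{Main} that charges each maximal arc of $D(S)$ to a single canonical \textsc{pea} at its associated internal vertex and bounds the number of distinct canonical \textsc{pea}s by $O(n)$ rather than $(n-1)(n-x)$ would equivalently yield $m \geq 2\pi/(n-1)$ and hence the conjecture.
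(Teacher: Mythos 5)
This statement is the paper's open conjecture; the paper offers no proof of it (it only proves the weaker bound $2\pi - \pi/((n-1)(n-x))$ in Theorem \ref{Main}), so the only question is whether your argument actually closes the gap. It does not. The incremental-insertion framework is sensible, and the easy parts are right: any polygonization has interior angle less than $\pi$ at each hull vertex, insertion only decreases the angles at the two endpoints of the split edge, and the signed subtended angles of the edges of $P_i$ around an interior point $v$ sum to $2\pi$, so some single edge subtends an unsigned angle of at least $2\pi/n'$. But the entire difficulty of the conjecture is concentrated in the step you explicitly defer: showing that some edge that is both \emph{safe} (the closed triangle $uvw$ meets no other vertex of $P_i$ and no uninserted point) and \emph{large} (subtending at least $2\pi/(n-1)$ at $v$) always exists under a suitable insertion order. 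For a non-convex intermediate polygon the safe edges can subtend far less than $2\pi$ in total at $v$ --- the ray from $v$ in a given direction may first hit the relative interior of an edge neither of whose endpoints is visible from $v$ --- so the pigeonhole over safe edges genuinely fails without additional structure. "Order by convex layers" is a plan, not an argument; no ordering is shown to work, and it is exactly this kind of interaction between visibility and angle size that Theorem \ref{Main} circumvents at the cost of an extra factor of roughly $2(n-x)$.

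The fallback you sketch is also not quantitatively adequate as stated. In the accounting of Theorem \ref{Main} one has $2\pi \leq 2mN$ where $N$ is the number of distinct \textsc{pea}s in the pot (the factor $2$ coming from the inscribed-angle comparison between a maximal arc of $D(S)$ and the \textsc{pea} containing it, which is not obviously removable). Reducing $N$ to $O(n)$ yields only $m \geq \pi/N = \Omega(1/n)$; to reach $m \geq 2\pi/(n-1)$ you would need $N \leq (n-1)/2$ \emph{and} to eliminate or absorb that factor of $2$. So neither route in the proposal establishes the conjecture; the statement remains open, and the concrete missing ingredient is a proof that a safe, large-angle edge survives at every insertion step.
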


\section*{Acknowledgements}

Many thanks to my advisor Professor Joshua Cooper and my colleagues Travis Johnston and Heather Smith for their feedback at various stages.
Also to Professor \'{E}va Czabarka for the discrete geometry course that led to this train of thought.

\appendix

%

\end{document}